\newtheorem{theorem}{Theorem}[section]
\newtheorem{lemma}[theorem]{Lemma}
\newtheorem{corollary}[theorem]{Corollary}
\newcommand{\N}{{\mathbb N}}
\newcommand{\sspace}{\mathcal{S}}
\newcommand{\qc}{Q} 
\title{On multidimensional generalization of binary search}
\author{Dariusz Dereniowski}
\address{Faculty of Electronics, Telecommunications and Informatics, Gdansk University of Technology, Gda\'{n}sk, Poland}
\email{deren@eti.pg.edu.pl}
\author{Przemys\l{}aw Gordinowicz}
\address{Institute of Mathematics, Lodz University of Technology, \L{}\'od\'z, Poland}
\email{pgordin@p.lodz.pl}
\author{Karolina Wr\'obel}
\address{Institute of Mathematics, Lodz University of Technology, \L{}\'od\'z, Poland}
\email{karolina.anna.wrobel1@gmail.com}
\begin{document}

\begin{abstract}
This work generalizes the binary search problem to a $d$-dimensional domain $S_1\times\cdots\times S_d$, where $S_i=\{0, 1, \ldots,n_i-1\}$ and $d\geq 1$, in the following way. Given $(t_1,\ldots,t_d)$, the target element to be found, the result of a comparison of a selected element $(x_1,\ldots,x_d)$ is the sequence of inequalities each stating that either $t_i < x_i$ or $t_i>x_i$, for $i\in\{1,\ldots,d\}$, for which at least one is correct, and the algorithm does not know the coordinate $i$ on which the correct direction to the target is given.
Among other cases, we show asymptotically almost matching lower and upper bounds of the query complexity to be in $\Omega(n^{d-1}/d)$ and $O(n^d)$ for the case of $n_i=n$.
In particular, for fixed $d$ these bounds asymptotically do match.
This problem is equivalent to the classical binary search in case of one dimension and shows interesting differences for higher dimensions.
For example, if one would impose that each of the $d$ inequalities is correct, then the search can be completed in $\log_2\max\{n_1,\ldots,n_d\}$ queries.
In an intermediate model when the algorithm knows which one of the inequalities is correct 
the sufficient number of queries is $\log_2(n_1\cdot\ldots\cdot n_d)$.
The latter follows from a graph search model proposed by Emamjomeh-Zadeh et al. [STOC 2016].
\end{abstract}

\maketitle

\section{Introduction} \label{wstep}

One of many applications of the classical binary search is to find a given value in a linear order representing potential outcomes of some kind of experimental measures (see~\cite{Ben-OrH08} for further references).
In such case, a comparison corresponds to performing one measure which shall reveal if the desired value is smaller or greater than the threshold picked for the measure.
A possible multi-dimensional extension of such search is when each particular experimental measure gives directions regarding many properties.
To provide a specific example consider a process of preparing a perfect drink.
Measurement of a test drink gives information if it is sweet enough and sour enough, which exemplifies a two-dimensional search case.
Note that learning that a given drink is too sour and not sweet enough provides information that in the search of the perfect drink one may exclude those that are even more sour and less sweet.
However, adding sugar and reducing acid is not guaranteed to be the correct way to improve the recipe.
For example, it is possible that the best drink could be produced by reducing the amount of acid significantly and slightly reducing the amount of sugar.
This dynamics is in line with our formulation of the search.

\subsection{Problem definition} \label{sec:problem-formulation}

One way to define the problem (see below for further comments) is to see it as an adaptive query-reply search game between two players called \emph{Algorithm} and \emph{Adversary}. 
Consider a $d$-dimensional search space (domain) $\sspace=S_1\times\cdots\times S_d$, where $S_i=\{0, 1, \ldots,n_i-1\}$, $d\geq 1$ and each $n_i$ is a positive integer.
One element $t=(t_1,\ldots,t_d)\in\sspace$ is called the \emph{target}.
The target is picked in advance by Adversary and is unknown to Algorithm.
That is, Algorithm may deduce $t$ only by considering the history of the game.

The game is divided into \emph{rounds}, where in each round Algorithm makes one \emph{query} and Adversary gives a \emph{reply}. The query is simply one element $q=(q_1,\ldots,q_d)$ of $\sspace$.
The reply is the following.
If $t = q$, then Adversary informs that the target is found, otherwise the reply gives for each $i\in\{1,\ldots,d\}$ information whether $t_i < q_i$ or $t_i > q_i$.
In the latter case, Adversary is bound by a restriction that for at least one such $i$ the given inequality holds.
(Hence, for the one dimensional case of $d=1$ our problem is exactly the classical binary search in a sorted array).
We note that Algorithm does not know the $i$, i.e., it does not learn the dimension $i$ on which the correct direction to the target is given in each reply.

The goal of Algorithm is to determine the target using the minimum number of queries.
We refer to this measure as the \emph{query complexity} and denote it by $\qc(n_1,\ldots,n_d)$.
When $n_1 = n_2 = \dots = n_d = n$, we denote $\qc(n^{\times d}) = \qc(n_1,\dots n_d)$.

Besides a formal statement of our problem, we also give some remarks regarding alternative formulations.
To this end we introduce the following notion of compatibility.
Consider a reply to a query on $q=(q_1,\ldots,q_d)$.
Consider any element $u=(u_1,\ldots,u_d)$ such that there exists $i$ for which either the reply says $t_i < q_i$ and it holds $u_i < q_i$, or the reply says $t_i>q_i$ and it holds $u_i>q_i$.
Then we say that $u$ is \emph{compatible} with the reply.
Each point that is compatible with a reply is a candidate for the target from the point of view of Algorithm.
Thus, any point that is not compatible is certainly not the target and hence Algorithm may exclude it from further considerations.
Then, one may equivalently state the problem so that Adversary does not need to pick the target in advance but instead maintains the set of points that are compatible with all replies to date, calling such points \emph{potential targets}.
The goal of Adversary is to ensure that the set of potential targets has more than one point for as long as possible.
This places the problem in the area of perfect information games.
Moreover, when one requires for a given search space, the output to be an optimal search strategy (often called a decision tree), then the problem becomes non-adaptive, purely combinatorial.

\subsection{Our results} \label{sec:our-results}
Here we formally state all results which provide a series of upper and lower bounds. The first bound in Theorem~\ref{thm:grid-tight} (the case of 2-dimensional grids) comes from Lemmas~\ref{lem:grid-lower} and~\ref{lem:grid-upper}.
\begin{theorem} \label{thm:grid-tight}
For any $m \geq n \geq 1$, 
\[\qc(m,n) \leq 2 n \left( \log_2 \frac{m}{n + 1} + 4 \right).\]
This bound is asymptotically tight.
\end{theorem}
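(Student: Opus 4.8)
The theorem combines the upper bound of Lemma~\ref{lem:grid-upper} with the matching lower bound of Lemma~\ref{lem:grid-lower}, so I would treat the two directions separately and spend most of the effort on the constructive upper bound. For the upper bound my plan is to maintain, after each query, a description of the set of potential targets as a family of per-row intervals: for each second coordinate $j\in\{0,\dots,n-1\}$ we keep the set of first coordinates $[\ell(j),r(j)]$ still compatible with all replies, and we keep these bounded between two monotone staircases. The basic engine is the observation that \emph{any} query made inside row $j$ halves that row's interval: if we query $(x,j)$ with $x$ a median of $[\ell(j),r(j)]$, then whatever valid reply $(d_1,d_2)$ the Adversary returns, the removed ``corner'' always contains one half of row $j$ — the portion with $u_1\ge x$ when $d_1$ points left and the portion with $u_1\le x$ when $d_1$ points right — because both mixed and pure corners through $(x,j)$ intersect row $j$ on exactly one side of $x$. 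Hence a binary search confined to a single row $j$ terminates in $\lceil\log_2(r(j)-\ell(j)+2)\rceil$ queries, regardless of the Adversary's choices.

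The point that makes this search \emph{correct} is the forced report at the target's own row: when $j=t_2$ the vertical comparison is necessarily false, so $d_1$ is truthful there, and the row-$t_2$ binary search converges to $t$ (which is never excluded by a valid reply) and reports it; in every other row the identical procedure merely empties the interval, since the final single cell, once queried, is itself removed by any reply. Summing over rows, the whole search costs $\sum_{j}\lceil\log_2(w(j)+1)\rceil$ queries, where $w(j)=r(j)-\ell(j)+1$. The crux of obtaining the stated bound is then to drive the configuration, using only $O(n)$ auxiliary queries, into a state in which the row-intervals are essentially disjoint, so that $\sum_j w(j)\le m+O(n)$; once this holds, concavity of $\log_2$ (Jensen/AM--GM) gives $\sum_j\log_2(w(j)+1)\le n\log_2\!\big(\tfrac{m}{n}+O(1)\big)$, which after absorbing the rounding terms and the disjointing overhead into the additive $8n$ yields the claimed $2n\big(\log_2\frac{m}{n+1}+4\big)$.

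The main obstacle, and where I expect the real work to lie, is exactly this disjointing step. A single ``column sweep'' at the global median confines every row to one side of the median but lets the Adversary keep the intervals heavily overlapping (total width still $\Theta(nm)$), and the anti-diagonal ambiguity shows that the target cannot be pinned to one horizontal half by cheap means. The plan is instead to exploit monotonicity of the staircase: processing rows in increasing order and charging each confinement step against rows that are \emph{permanently} resolved, so that the monotone structure forces the column range explored in row $j+1$ to lie beyond that explored in row $j$, making the explored ranges disjoint by construction and keeping the orchestration cost $O(n)$ in total rather than $O(n)$ per scale. Verifying that the Adversary cannot defeat this bookkeeping — and pinning down the constant factor $2$ — is the delicate part.

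For asymptotic tightness I would simply invoke Lemma~\ref{lem:grid-lower}, which supplies a matching $\Omega(n\log(m/n))$ lower bound (the Adversary maintaining a thin monotone diagonal on which each query eliminates little): since the upper bound is $2n\log_2\frac{m}{n+1}+8n=\Theta\big(n\log\frac{m}{n}+n\big)$, the two meet up to constant factors, giving $\qc(m,n)=\Theta\!\big(n(\log\frac{m}{n}+1)\big)$ and establishing that the bound is asymptotically tight.
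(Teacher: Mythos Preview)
Your basic observation is correct and is the same one the paper uses: a query at the median of row $j$ always removes one half of that row, because the first-coordinate direction in the reply determines which half of row $j$ lies in the excluded box regardless of the second-coordinate direction. And you are right that the whole difficulty is the ``disjointing step'': without it, row-by-row binary search costs $\Theta(n\log m)$, not $\Theta(n\log(m/n))$.

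The gap is that you have not actually carried out the disjointing, and your sketched plan (process rows in order, rely on a monotone staircase so that ranges explored in successive rows are forced to be disjoint) does not work as stated. After a few queries the set of potential targets need not be bounded by a single monotone staircase per side; different replies in different rows can leave overlapping horizontal extents above and below, and there is no charging argument that makes the per-row ranges disjoint for $O(n)$ overhead. The Adversary is not obliged to keep your picture monotone.

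The paper's device is different and much cleaner: do the binary search on the \emph{middle} row first. Once that row is exhausted, record for each of the four corners the most extreme $x$-coordinate at which that corner was answered; the key fact is that $\max\{m_{00},m_{01}\}+1=\min\{m_{10},m_{11}\}$, so the remaining potential targets split into two genuine \emph{rectangular} subgrids, one above and one below the middle row, of heights $(n-1)/2$ and widths summing to exactly $m$ (after a worst-case symmetrization). Now recurse on each subgrid. The recursion halves $n$ while preserving total width $m$, so after $\log_2 n$ levels you have $n$ width-$1$ problems whose lengths sum to $m$ --- exactly the disjoint configuration you were trying to manufacture --- and Jensen on the per-level binary-search costs gives the bound of Lemma~\ref{lem:grid-upper}. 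The factor $2$ in the theorem then comes only from rounding $n$ up to $2^k-1$. In short, the missing idea is to recurse on the \emph{vertical} dimension via the middle-row binary search, which gives the width-preserving split for free, rather than trying to disjoint the rows directly.
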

Then we move to the 3-dimensional case, where the asymptotically exact query complexity is due to the next theorem and Corollary~\ref{cor:upper3g}.
\begin{theorem} \label{thm:lower3D}
For each $n_1\geq n_2\geq n_3\geq 2$ it holds
\[\qc(n_1, n_2, n_3) \geq \frac{1}{2} \big( n_2 - 1 \big) n_3 \left( \log_2 \frac{n_1 - 1}{n_2 - 1} + 1 \right).\]
\end{theorem}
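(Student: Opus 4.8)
The plan is to give a direct adversary argument that lifts the anti-diagonal construction behind the two-dimensional lower bound (Lemma~\ref{lem:grid-lower}) to three dimensions. I would maintain a family of \emph{tokens} indexed by pairs $(j,k)$ with $j\in\{1,\dots,n_2-1\}$ and $k\in\{0,\dots,n_3-1\}$; token $(j,k)$ has its second coordinate fixed to $j$, its third coordinate fixed to $k$, and an \emph{active interval} $I_{j,k}\subseteq S_1$ of still-admissible first coordinates. Each token is a pocket of potential targets, so there are $(n_2-1)n_3$ of them, and the game ends only once all but one has been reduced to a single point or emptied. I would initialize the intervals so that, inside each fixed layer $k$, the tokens tile $S_1$ into $n_2-1$ blocks of length about $(n_1-1)/(n_2-1)$ placed along the anti-diagonal of the $(\text{coord }1,\text{coord }2)$-plane, and so that the tilings of consecutive layers are \emph{staggered} in the first coordinate.

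The core invariant to maintain is the following: for every column $x\in S_1$, the tokens $(j,k)$ whose active interval contains $x$ in its interior form an anti-diagonal of the $(j,k)$ index grid, meeting each line $j+k=\text{const}$ at most once. Granting this, a query $q=(q_1,q_2,q_3)$ is answered in two stages. First the adversary picks the signs $\sigma_2,\sigma_3$ so that the eliminated box restricts coordinates $2$ and $3$ to a corner quadrant $A_2\times A_3$ of the index grid; by the invariant and the planar anti-diagonal estimate already used in two dimensions, this quadrant can be chosen to contain at most one of the tokens that $q_1$ could genuinely bisect (boundary effects may force two). For that one token the sign $\sigma_1$ is then set so that only a proper sub-interval of $I_{j,k}$ is trimmed, exactly as in one-dimensional binary search, while every other token survives because it is protected on coordinate $2$ or on coordinate $3$. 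Validity of the reply is immediate, since some surviving token always sits on the named side.

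The accounting then follows the classical pattern. A token whose active interval has $\ell$ admissible values needs at least $\log_2\ell+1$ rounds that trim it before it can be pinned to a point or emptied, and by construction every $\ell\ge (n_1-1)/(n_2-1)$; since each query trims at most two tokens, the number of queries is at least
\[
\frac{1}{2}\sum_{j,k}\Bigl(\log_2\tfrac{n_1-1}{n_2-1}+1\Bigr)=\frac{1}{2}(n_2-1)n_3\Bigl(\log_2\tfrac{n_1-1}{n_2-1}+1\Bigr),
\]
which is precisely the claimed bound, with the factor $\tfrac12$ coming from the ``at most two tokens per query'' count.

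The hard part will be the construction of the first paragraph together with the preservation of the anti-diagonal invariant under trimming. I must choose the per-layer stagger so that, for each fixed column, the covering tokens across the $n_3$ layers have strictly monotone second-coordinate index (this is what makes them an anti-diagonal of the index grid), and I must check that trimming one token's interval never breaks the invariant at other columns. Controlling the stagger when $(n_1-1)/(n_2-1)$ is small relative to $n_3$, and nailing down the exact ``at most two tokens per query'' bound including the boundary cases, is where the genuine work lies; by contrast the one- and two-dimensional ingredients are just the binary-search and anti-diagonal estimates already in hand.
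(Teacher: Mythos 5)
Your high-level plan (a staggered family of first-coordinate intervals, a corner-answering Adversary, and a per-interval binary-search count) is the same strategy the paper uses, but the part you defer as ``the genuine work'' is not a technicality: the construction you specify cannot exist, and your fallback invariant is too weak. If every layer $k$ is fully tiled by $n_2-1$ nonempty intervals $I_{1,k},\dots,I_{n_2-1,k}$ and every column's covering index is strictly decreasing across the $n_3$ layers, then each column's layer-$0$ index is at least $n_3$, so layer $0$ uses at most $n_2-n_3$ of its indices --- contradicting the full tiling whenever $n_3\ge 2$, and leaving nothing at all in the cube case $n_1=n_2=n_3$ needed for Corollary~\ref{dolna3_szescian} (the increasing case is symmetric). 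The natural escape, a cyclic stagger by one token length, creates a wrap-around seam at which monotonicity fails. Your weaker invariant (``meets each line $j+k=\mathrm{const}$ at most once'') does not save the argument: for a non-monotone covering sequence, e.g. $g=(1,5,2,9)$ with the query at the token $(j,k)=(2,2)$, exactly one of the eight corners hits at most one token, so the direction of the trim on the bisected token is forced and known to Algorithm in advance; Algorithm then places $q_1$ at the extreme endpoint of that token's interval and deletes the \emph{entire} token in one round. Under such play a token costs $O(1)$ queries rather than $\log_2\ell+1$, and the logarithmic factor in your final count disappears.

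The same issue invalidates your two-stage sign choice: once $\sigma_2,\sigma_3$ fix the quadrant, $\sigma_1$ is not a free parameter --- flipping it exchanges ``tokens between the covering staircase and $q_2$'' with ``all quadrant tokens beyond the staircase'' --- so it cannot be reserved to protect the bisected token, and your ``at most two tokens per query'' count is not the real source of the constant $\tfrac12$. The paper resolves precisely this point by giving its plane slope $\tfrac12$: each layer then carries only about $(n_2-1)/2$ segments of length $2(n_1-1)/(n_2-1)$ (this halving is where the factor $\tfrac12$ comes from), and the per-layer stagger $(n_1-1)/(n_3-1)$ is at least half a segment length (this is exactly where $n_3\le n_2$ is used), which guarantees that the geometrically forced corner reply always removes at most the minority half of the queried segment and touches no other segment. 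Your plan can be repaired along the same lines --- shift layer $k$ by $k$ full token lengths and keep only the intervals that still fit inside $S_1$, so that roughly $\tfrac12(n_2-1)n_3$ tokens survive, each hit by at most one properly protected trim per query --- but then both the construction and the origin of $\tfrac12$ coincide with the paper's, and the version you actually proposed does not stand.
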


For an arbitrary dimension $d$ we obtain the two following bounds.
\begin{theorem} \label{thm:lower-cube} 
For $d \geq 3$ and $n \geq 2$ it holds 
\[\qc(n^{\times d}) \geq 2 \left \lfloor \frac{n^{d - 1}}{d - 1}. \right \rfloor\]
\end{theorem}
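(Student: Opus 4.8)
The plan is to exhibit an explicit Adversary (equivalently, potential-target) strategy supported on two adjacent ``central'' layers of the cube. For $0\le v\le d(n-1)$ write $L_v=\{u\in\sspace : u_1+\cdots+u_d=v\}$ for the $v$-th layer, fix an integer $c$ near $d(n-1)/2$, and keep the set of potential targets inside $P=L_c\cup L_{c+1}$. Two facts drive the argument: (i) two \emph{adjacent} layers let Adversary discard at most one potential target per query, and (ii) the two central layers together are large, with $|L_c|+|L_{c+1}|\ge 2\lfloor n^{d-1}/(d-1)\rfloor+1$. Granting (i) and (ii), each query lowers $|P|$ by at most one, so Algorithm needs at least $|P|-1\ge 2\lfloor n^{d-1}/(d-1)\rfloor$ queries to reduce the potential targets to a single point, which is exactly the claimed bound (a lower bound via a fixed Adversary strategy suffices, since $\qc$ is the worst-case optimum).

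For (i), suppose Algorithm queries $q$ and put $s=q_1+\cdots+q_d$. If $s\le c$, Adversary replies ``$t_i>q_i$'' on every coordinate $i$. A point $u$ is excluded exactly when $u_i\le q_i$ for all $i$, and then $u_1+\cdots+u_d\le s\le c$, so $u$ can meet $P$ only on $L_c$ and only when $s=c$ and $u=q$; hence at most the single point $q$ is deleted. The reply is legal because every surviving $u\in P$ satisfies $u_i>q_i$ for some $i$ (otherwise $u_i\le q_i$ for all $i$ and $u$ would have been excluded), so for each survivor at least one declared inequality is correct. The case $s\ge c+1$ is symmetric: Adversary declares ``$t_i<q_i$'' everywhere and removes at most $q$ from $L_{c+1}$. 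Thus in every case the reply is valid and deletes at most one potential target, which is (i); the adjacency $c+1=c+1$ is essential, since for two non-adjacent layers a query with sum strictly between them could force many deletions.

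For (ii), set $g(v)=|\{(u_1,\dots,u_{d-1})\in\{0,\dots,n-1\}^{d-1}:u_1+\cdots+u_{d-1}=v\}|$. Choosing $u_d=c-v$ forces $u_d\in\{0,\dots,n-1\}$ precisely when $c-n+1\le v\le c$, so $|L_c|=\sum_{v=c-n+1}^{c}g(v)$ is a sum of $n$ consecutive terms of the sequence $g$, which is symmetric and unimodal with $\sum_v g(v)=n^{d-1}$ over $(d-1)(n-1)+1$ indices. Taking the window of $n$ consecutive indices centred at the peak (these are the $n$ largest values of $g$), their sum is at least $n$ times the overall average $n^{d-1}/((d-1)(n-1)+1)\ge n^{d-1}/((d-1)n)$, so $|L_c|\ge n^{d-1}/(d-1)$; choosing the two windows symmetric about the peak bounds $L_{c+1}$ the same way and gives (ii).

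The conceptual content is (i), which I have verified; the reduction ``each query removes $\le 1$, hence $\qc\ge|P|-1$'' and the legality check are then routine. I expect the main technical obstacle to be the counting in (ii): one must pick $c$ so that \emph{both} length-$n$ windows sit centrally, exploit the symmetry and unimodality of $g$, and carefully absorb the floor together with the additive constants so that the combined size is genuinely at least $2\lfloor n^{d-1}/(d-1)\rfloor+1$ rather than only the asymptotic $2n^{d-1}/(d-1)$; there is ample slack (the true central layers are of order $n^{d-1}/\sqrt{d}$), but the floor bookkeeping for all $d\ge 3$ and $n\ge 2$ is where the care is needed.
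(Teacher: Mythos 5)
Your part (i) is correct and is the attractive feature of your approach: restricting the potential targets to two adjacent diagonal layers $L_c\cup L_{c+1}$ and replying ``all coordinates up'' when $q_1+\cdots+q_d\le c$ and ``all coordinates down'' otherwise does delete at most the queried point, and the legality check is exactly as you say. This is genuinely different from the paper, which instead restricts the target to the staircase hyperplane $H=\{x:\ x_d=\lfloor n-1-(x_1+\cdots+x_{d-1})/(d-1)\rfloor\}$ (which has exactly $n^{d-1}$ points by construction) and answers to the corners $(0,\dots,0)$ and $(1,\dots,1)$ according to the residue of $x_1+\cdots+x_d$ modulo $d-1$, forcing Algorithm to query the two middle residue classes, i.e.\ $2\lfloor n^{d-1}/(d-1)\rfloor$ points. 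The paper's construction is designed precisely so that no size estimate for diagonal layers is ever needed; your construction makes the Adversary cleaner but shifts all of the difficulty into the counting claim (ii).

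And (ii), as sketched, has a genuine gap that is more than ``floor bookkeeping.'' Writing $W(a)=\sum_{v=a}^{a+n-1}g(v)$, you need two \emph{adjacent} windows with $W(a)+W(a+1)\ge 2\lfloor n^{d-1}/(d-1)\rfloor+1$. A pair of adjacent length-$n$ windows can be placed symmetrically about the centre $\mu=(d-1)(n-1)/2$ of $g$ only when $d(n-1)$ is odd, i.e.\ only when $d$ is odd \emph{and} $n$ is even. In every other case (in particular for every odd $n$, e.g.\ $d=3$, $n=3$, and for every even $d$), whichever window of the pair is not peak-centred provably does \emph{not} contain the $n$ largest values of $g$: relative to the centred window it trades a value at distance $(n-1)/2$ from the peak for one at distance $(n+1)/2$, so your ``sum of the $n$ largest $\ge n\times$ average'' bound simply does not apply to it. This loss cannot be absorbed for free, because the averaging bound has almost no slack: $n\cdot n^{d-1}/((d-1)(n-1)+1)$ exceeds $n^{d-1}/(d-1)$ only by $\Theta(n^{d-2}/d)$, whereas the loss is the one-step decrement of $g$ near distance $n/2$ from its peak, which is for instance of order $2^{d}/d^{3/2}$ when $n=2$ (there $g(v)=\binom{d-1}{v}$); comparing these requires quantitative control of the decrement (via log-concavity of $g$ together with local-limit-type bounds on $g_{\max}$ and on the width of $g$, or a separate case analysis such as binomial identities for $n=2$), none of which is in your sketch. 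The statement (ii) is certainly true --- the central layers really have size $\Theta(n^{d-1}/\sqrt{d})$ --- but your proposal proves it only in the parity case $d$ odd, $n$ even, so as written the proof is incomplete.
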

\begin{theorem} \label{thm:upper-d}
For any $d\geq 2$ and any $n_1\geq\cdots\geq n_d\geq 1$ it holds
\[\qc(n_1, n_2, \dots, n_d) \in O \left( \prod_{i = 2}^{d} n_i \left( \log_2 \frac{n_1}{n_{2}} + 1 \right) \right).\]
\end{theorem}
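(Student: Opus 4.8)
The plan is to prove the bound by induction on the dimension $d$, with the two-dimensional estimate of Theorem~\ref{thm:grid-tight} serving as the base case. The inductive step rests on the single-coordinate reduction
\[\qc(n_1,\ldots,n_d) \le n_d\bigl(\qc(n_1,\ldots,n_{d-1}) + 1\bigr).\]
Applying this $d-2$ times peels off the coordinates $n_d, n_{d-1},\ldots,n_3$ (each contributing a multiplicative factor equal to its size) and reduces to the instance $(n_1,n_2)$, for which Theorem~\ref{thm:grid-tight} yields $\qc(n_1,n_2) = O\bigl(n_2(\log_2(n_1/n_2)+1)\bigr)$. Substituting and collecting the accumulated factor $\prod_{i=3}^d n_i$ gives $O\bigl(\prod_{i=2}^d n_i(\log_2(n_1/n_2)+1)\bigr)$; the additive contributions of the ``$+1$'' terms are of lower order for fixed $d$.

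To establish the reduction, fix an optimal strategy $T$ for the $(d-1)$-dimensional domain $S_1\times\cdots\times S_{d-1}$, viewed as a decision tree of depth $D:=\qc(n_1,\ldots,n_{d-1})$. For each candidate value $j\in\{0,\ldots,n_d-1\}$ of the last coordinate I would run a separate copy $T_j$ of $T$: whenever $T_j$ intends to query $q'=(q_1,\ldots,q_{d-1})$, instead query the $d$-dimensional point $(q_1,\ldots,q_{d-1},j)$ and return to $T_j$ the reply truncated to its first $d-1$ coordinates. The crucial point is a dichotomy for each truncated reply: either at least one of its $d-1$ inequalities is correct, in which case it is a legal $(d-1)$-dimensional reply consistent with the target $(t_1,\ldots,t_{d-1})$, or all $d-1$ are false, in which case the guarantee that some inequality among all $d$ coordinates holds forces the last coordinate to carry the correct direction, so that $q_d = j \neq t_d$.

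This dichotomy controls both the correct and the incorrect copies. When $j = t_d$, every query $(q_1,\ldots,q_{d-1},t_d)$ with $(q_1,\ldots,q_{d-1})\neq(t_1,\ldots,t_{d-1})$ has a false inequality in the last coordinate, so its truncation is always legal; hence $T_{t_d}$ sees a valid play of the $(d-1)$-dimensional game, reaches $(t_1,\ldots,t_{d-1})$ within $D$ queries, and one further query of that point with last coordinate $t_d$ hits $t$ exactly and returns ``found.'' When $j\neq t_d$, abandon $T_j$ the instant a truncated reply is all-false (by the dichotomy this certifies $j\neq t_d$ and can never happen for the correct copy); otherwise let $T_j$ run to a leaf, noting that no query it issues has last coordinate $t_d$, so no incorrect copy can ever report ``found.'' Each copy thus makes at most $D+1$ queries, the total over all $j$ is at most $n_d(D+1)$, and the target is certainly located while processing copy $j=t_d$.

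The step I expect to be the main obstacle is making the dichotomy and the termination argument airtight: one must verify that a truncated reply is either a legal $(d-1)$-dimensional reply consistent with $(t_1,\ldots,t_{d-1})$ or all-false (with the latter certifying an incorrect $j$), and that a depth-$D$ decision tree driven by the $d$-dimensional oracle --- and occasionally handed an illegal all-false reply before being abandoned --- still issues only $O(D)$ queries. Once these facts are in place, the counting and the combination with Theorem~\ref{thm:grid-tight} are routine.
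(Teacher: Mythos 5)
Your proposal is correct and follows essentially the same route as the paper: slice the domain along the last coordinate into $n_d$ independent $(d-1)$-dimensional games and recurse (equivalently, induct) down to the two-dimensional base case of Theorem~\ref{thm:grid-tight}. In fact your dichotomy argument (a truncated reply is either a legal $(d-1)$-dimensional reply consistent with $(t_1,\ldots,t_{d-1})$ or all-false, certifying $q_d\neq t_d$) carefully justifies the reduction $\qc(n_1,\ldots,n_d)\le n_d\bigl(\qc(n_1,\ldots,n_{d-1})+1\bigr)$, which the paper's proof asserts without elaboration as $\qc(G)\le n_d\cdot\qc(G')$.
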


Note that this generic bound when applied to the case of $n_i=n$ for each $i$, it gives an almost matching counterpart to Theorem~\ref{thm:lower-cube}:
\begin{corollary} \label{cor:upper-cube}
For any $d\geq 2$ and any $n\geq 2$, $\qc(n ^{\times d}) \in O( n^{d-1} ).$
\end{corollary}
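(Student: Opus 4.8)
The plan is to obtain the corollary as an immediate specialization of Theorem~\ref{thm:upper-d} to the cube $n_1=\cdots=n_d=n$. First I would verify that the hypotheses of Theorem~\ref{thm:upper-d} are met: the requirement $n_1\geq\cdots\geq n_d\geq 1$ holds trivially since all coordinates equal $n$ and $n\geq 2\geq 1$, while $d\geq 2$ is assumed in both statements.

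Next I would evaluate the two factors appearing in the bound of Theorem~\ref{thm:upper-d} under this substitution. The product ranges over $i\in\{2,\ldots,d\}$, so it consists of exactly $d-1$ factors, each equal to $n$, giving $\prod_{i=2}^{d}n_i=n^{d-1}$. The logarithmic factor simplifies because $n_1=n_2=n$, so that $\log_2\frac{n_1}{n_2}+1=\log_2 1+1=1$. Multiplying the two contributions yields $n^{d-1}\cdot 1=n^{d-1}$, and the estimate of Theorem~\ref{thm:upper-d} therefore specializes to $\qc(n^{\times d})\in O(n^{d-1})$.

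Since the whole argument is a substitution into an already established bound, I do not expect any genuine obstacle. The only point deserving a moment of care is the convention governing the asymptotic notation: the estimate in Theorem~\ref{thm:upper-d} is taken with $d$ fixed and $n\to\infty$, so the hidden constant may depend on $d$ but not on $n$. Under this convention the specialization is valid and the claimed $O(n^{d-1})$ follows directly, matching Theorem~\ref{thm:lower-cube} up to the factor hidden in the $O$-notation.
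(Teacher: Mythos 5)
Your proposal is correct and is exactly the paper's route: the corollary is stated as an immediate specialization of Theorem~\ref{thm:upper-d} with $n_1=\cdots=n_d=n$, where the product gives $n^{d-1}$ and the logarithmic factor collapses to $\log_2 1+1=1$. Your remark about the hidden constant being allowed to depend on $d$ (but not on $n$) is the right reading of the asymptotic statement, and matches the paper's framing that the bounds match only for fixed $d$.
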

This leaves our first open question: what is the query complexity $\qc(n^{\times d})$ for an arbitrary dimension $d$?
We note that Theorem~\ref{thm:upper-d} applied for $d=3$ gives an upper bound that is asymptotically matching Theorem~\ref{thm:lower3D}: 
\begin{corollary} \label{cor:upper3g}
For each $n_1\geq n_2\geq n_3\geq 1$, it holds
\[\qc(n_1, n_2, n_3) \in O \left( n_2 n_3 \left( \log_2 \frac{n_1}{n_2} + 1 \right) \right).\]
\end{corollary}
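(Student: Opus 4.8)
The plan is to derive this corollary directly from Theorem~\ref{thm:upper-d} by specializing its general upper bound to dimension $d=3$. First I would verify that the hypotheses match: Theorem~\ref{thm:upper-d} requires $d\geq 2$ together with a weakly decreasing sequence $n_1\geq\cdots\geq n_d\geq 1$, and the corollary assumes exactly $d=3\geq 2$ with $n_1\geq n_2\geq n_3\geq 1$. Thus the theorem applies without any additional assumption.

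The remaining step is to evaluate the asymptotic expression at $d=3$. The product then ranges over $i\in\{2,3\}$, giving $\prod_{i=2}^{3} n_i = n_2 n_3$, while the factor $\log_2(n_1/n_2)+1$ is already written in terms of the two largest coordinates and is unaffected by the choice of $d$. Substituting into the bound $O\!\left(\prod_{i=2}^{d} n_i\,(\log_2(n_1/n_2)+1)\right)$ of Theorem~\ref{thm:upper-d} produces precisely $O\!\left(n_2 n_3\,(\log_2(n_1/n_2)+1)\right)$, which is the claimed statement.

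Because the corollary is a pure instantiation, I do not expect any genuine obstacle at this stage: all of the algorithmic construction and the accompanying counting estimates are already contained in the proof of Theorem~\ref{thm:upper-d}. The only point meriting a moment's care is bookkeeping of the product index, namely confirming that $\prod_{i=2}^{d} n_i$ collapses to the two-factor product $n_2 n_3$ (rather than, say, an empty product) when $d=3$, but this is immediate. For context, pairing this upper bound with Theorem~\ref{thm:lower3D} shows that $\qc(n_1,n_2,n_3)$ is pinned down asymptotically in the three-dimensional case.
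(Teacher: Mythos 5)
Your proposal is correct and is exactly the paper's argument: the corollary is obtained by instantiating Theorem~\ref{thm:upper-d} at $d=3$, where $\prod_{i=2}^{3} n_i = n_2 n_3$ and the hypotheses $n_1\geq n_2\geq n_3\geq 1$ match those of the theorem. Nothing further is needed.
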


\subsection{Related work} \label{sec:related-work}

The study of the classical binary search has a long and rich history.
Since the problem in its basic setting is well understood, a lot of work has been done towards generalizations.
One of those includes considering noise, i.e., allowing that Adversary not always gives a correct reply \cite{Ben-OrH08,DLU21,EKMS20,FeigeRPU94,KarpK07}.
Another that is closer to our setting is so called group testing with its many variations.
In one of them the goal is to identify many targets and a response to a test understood as a subset of the search space is positive if it contains at least one target, see e.g. \cite{DyachkovVPS16}.
Particularly, the search space can be a continuous unit square \cite{BonisGV97}.
As an interesting example we mention the case where one searches for two targets within disjoint set of items of size $n$ and $m$, and each test pick an arbitrary subset with a reply indicating whether at least one target is in the subset; in this case it turns out that $\log_2(mn)$ queries are enough \cite{ChangH80}.
We refer interested reader to surveys outlining different models, connections to information theory and applications \cite{Deppe2007,Pelc02}.

The first works on graph searching have been phrased as searching partial orders \cite{Ben-AsherFN99,LamY01,MozesOW08,OnakP06}.
(Searching a tree-like partial order with a maximum element can be rephrased as edge search in a tree, which is equivalent to the edge ranking problem \cite{Dereniowski08} and consequently to other graph parameters like tree-depth \cite{NesetrilM06}).
Also, weighted versions, again most intensively studied, have been considered.
In the weighted versions, each point has an associated query cost (or duration) and the goal is to minimize the total cost \cite{CicaleseJLV12,CicaleseKLPV16,DereniowskiKUZ17}.
A generalization to graphs has been introduced in \cite{EKS16}: each query picks a point $q$ in a graph and Adversary informs Algorithm that $q$ is the target if that is the case, or otherwise Adversary gives any edge incident to $q$ that lies on the shortest path between $q$ and the target.
It turns out that for any $n$-point graph $\log_2n$ queries are enough \cite{EKS16}.
We mention interesting applications of the graph searching in machine learning \cite{Emamjomeh-ZadehK17}.

We note that the majority of works focus solely on the query complexity.
However, from the point of view of potential applications, it may be of interest to have an algorithm whose computational complexity per query is low.
We refer to some works that address this issue \cite{DeligkasMS19,DereniowskiLU21,EKS16}.

Further works on graph setting include models like edge queries and pair queries \cite{DereniowskiGP23}, searching for multiple targets \cite{DeligkasMS19}, or performing queries by a mobile agent \cite{BoczkowskiFKR21}.

\section{Preliminaries}

A reply that informs that the target has been found will be called a \emph{yes-reply}.
In our analysis we are mainly interested in handling replies that do not end the search, i.e, those that do not inform that the queried point is the target.
The game may be considered as a perfect information game without yes-replies in the following way. Let $P_k$ be the set of potential targets, after $k$th round. Initially, before the first query $P_0 = \sspace$.
Now, suppose that in the $k$th round Algorithm queried a point $q = (q_1, \dots, q_d) \in \sspace$.
The reply may be viewed as a binary sequence $r = (r_1, \dots r_d)$, such that $r_i = 1$ when the reply says $t_i < q_i$ or $r_i = 0$, when $t_i > q_i$.
By de Morgan's law the set of points inconsistent with the reply is $X = X_1 \times X_2 \times \dots \times X_d$, where $X_i = \{\min\{q_i, r_i(n_i-1)\},\ldots,\max\{q_i, r_i(n_i-1)\}\}$.
Hence, the set of potential targets after $k$th round is $P_k = P_{k-1} \setminus X.$ Moreover, because a sequence $r$ may be identified with the corner of the grid $\sspace$, we say that Adversary (negatively) \emph{answered to the corner} $r$.
We point out that the binary sequences are not extremities of the search space $\sspace$.
For example, in a grid $\sspace=S_1\times S_2$, for a query on a $q=(q_1,q_2)$ a reply saying $t_1<q_1$ and $t_2<q_2$ is interpreted as answering to the corner $(1,1)$ regardless of the values of $|S_1|$ and $|S_2|$.
This notation simplifies our statements regarding directions of reply inequalities by eliminating the need of referring to the dimension sizes.

We will use the following Jensen type inequality:
\begin{theorem} \label{sum_of_logs}
For any $n \in \mathbb{N}$ and any sequence $\left( x_1, x_2, \dots, x_n \right)$ of positive numbers
\begin{displaymath}
\sum_{i=1}^n \log_2 x_i \leq n \log_2 \left( \frac{\sum_{i=1}^n x_i}{n} \right).
\end{displaymath}
\end{theorem}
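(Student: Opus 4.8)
The plan is to recognize the statement as a direct instance of Jensen's inequality applied to the concave function $\log_2$. First I would establish that $x \mapsto \log_2 x$ is concave on the positive reals; this is immediate from the fact that its second derivative, $-1/(x^2 \ln 2)$, is negative throughout $(0,\infty)$. Concavity of a function $f$ means precisely that for any points $x_1,\dots,x_n$ in its domain and any weights $\lambda_i \geq 0$ with $\sum_i \lambda_i = 1$, one has $\sum_i \lambda_i f(x_i) \leq f\big(\sum_i \lambda_i x_i\big)$.

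Second, I would specialize to the uniform weights $\lambda_i = 1/n$. This yields
\[
\frac{1}{n}\sum_{i=1}^n \log_2 x_i \leq \log_2\left(\frac{1}{n}\sum_{i=1}^n x_i\right),
\]
and multiplying both sides by $n$ produces exactly the claimed inequality. The positivity assumption on the $x_i$ is exactly what is needed to keep every argument of $\log_2$ inside the domain of concavity.

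If a self-contained argument is preferred over invoking Jensen's inequality as a black box, I would instead reduce the statement to the arithmetic--geometric mean inequality. Writing $\sum_i \log_2 x_i = \log_2 \prod_i x_i = n \log_2\big(\prod_i x_i\big)^{1/n}$ and using the monotonicity of $\log_2$, the claim becomes equivalent to the AM--GM inequality $\big(\prod_i x_i\big)^{1/n} \leq \frac{1}{n}\sum_i x_i$, which admits a standard proof (for instance by Cauchy's forward--backward induction, or by induction combined with the weighted two-point case).

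The main obstacle here is essentially nonexistent, as the result is a classical inequality. The only point that requires any care is justifying the concavity of the logarithm (equivalently, proving AM--GM) rather than the specialization step itself; everything following the concavity statement is a one-line substitution of equal weights.
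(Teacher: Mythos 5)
Your proof is correct: the inequality is exactly Jensen's inequality for the concave function $\log_2$ with uniform weights $\lambda_i = 1/n$, and your concavity check and the equivalent reduction to AM--GM are both sound. The paper itself offers no proof --- it states the result as a known ``Jensen type inequality'' --- so your argument simply fills in the standard justification the authors had in mind.
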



\section{Optimal solution on $2$ dimensions}

In this section we show bounds for the query complexity for a $2$-dimensional grid.
We hence prove Theorem~\ref{thm:grid-tight} by giving the respective lower and upper bounds in Lemmas \ref{lem:grid-lower} and \ref{lem:grid-upper}.

\begin{lemma} \label{lem:grid-lower}
For any $m \geq n \geq 1$  it holds $\qc(m,n)\geq n\log_2(m/n)$.
\end{lemma}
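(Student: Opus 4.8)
The plan is to exhibit an \emph{Adversary} strategy that forces at least $n\log_2(m/n)$ queries, by confining the target to a structured family of candidates and tracking a logarithmic potential that the Adversary can force to drop by at most one per round. Throughout I call the $n$ values of the smaller coordinate \emph{rows} and the $m$ values of the larger coordinate \emph{columns}.

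\textbf{Set-up.} I would partition the columns into $n$ consecutive pairwise disjoint blocks $I_0,\dots,I_{n-1}$, each of width $m/n$ (assume $n\mid m$ for readability; in general the widths differ by at most one and only lower-order terms are lost), and have the Adversary maintain the invariant that the set of potential targets is
\[ P=\bigcup_{j=0}^{n-1}\bigl(I_j\times\{j\}\bigr), \]
where each $I_j=[a_j,b_j]$ is an interval of columns and $b_0<a_1,\ b_1<a_2,\ \dots,\ b_{n-2}<a_{n-1}$; that is, the blocks form an increasing staircase in which the row order agrees with the column order. Writing $w_j=|I_j|$ for the current width of row $j$, I define the potential
\[ \Phi=\sum_{j\,:\,w_j\ge1}\log_2 w_j, \]
whose initial value is $\Phi_{\mathrm{init}}=n\log_2(m/n)$. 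Since the search ends only when $|P|=1$, that is, when a single row of width $1$ survives, the final value is $\Phi_{\mathrm{final}}=0$.

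\textbf{Main step: each query lowers $\Phi$ by at most $1$.} Suppose Algorithm queries $q=(q_1,q_2)$. If $q_1\in I_{q_2}$ and $w_{q_2}\ge2$, the Adversary answers to one of the two opposite corners, namely the corner with $t_1<q_1$ and $t_2>q_2$, or the corner with $t_1>q_1$ and $t_2<q_2$. The first eliminates the box of columns $\ge q_1$ and rows $\le q_2$; by the staircase invariant every row $j<q_2$ lies strictly to the left of $q_1$ and hence loses nothing, every row $j>q_2$ is outside the box, and row $q_2$ keeps exactly its left part $[a_{q_2},q_1-1]$. The second corner symmetrically leaves row $q_2$ with its right part $[q_1+1,b_{q_2}]$. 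The Adversary keeps the larger of the two parts, so $w_{q_2}$ drops to at least $\lceil w_{q_2}/2\rceil$, the term $\log_2 w_{q_2}$ decreases by at most $1$, no other row is affected, and the staircase invariant is preserved. If instead $w_{q_2}=1$, the same corner deletes only the single point $q$, removing a term equal to $\log_2 1=0$, so $\Phi$ is unchanged; and if $q\notin P$ the Adversary answers a corner whose box avoids all surviving blocks, again leaving $\Phi$ fixed. In each case one checks the chosen reply is legal, i.e. compatible with a surviving potential target. Hence $\Phi$ decreases by at most $1$ per round, and the number of queries is at least $\Phi_{\mathrm{init}}-\Phi_{\mathrm{final}}=n\log_2(m/n)$.

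The genuinely delicate point, and the step I expect to be the main obstacle, is the Main step: because a reply always eliminates a full \emph{product} box, a centrally placed query can a priori cut through roughly half of the rows at once, which would let $\Phi$ plummet and break the per-round bound. The disjoint-staircase invariant is designed precisely to neutralise this, guaranteeing that for every query the Adversary has a corner whose box meets only the queried row, so the loss is localised. The remaining work is bookkeeping: verifying legality of each reply, and checking that a row is destroyed only after its width has been ground down to $1$ (so its disappearance costs nothing in $\Phi$), which is what anchors the final potential at $0$ and yields the stated bound.
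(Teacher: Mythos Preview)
Your staircase-plus-anti-diagonal-corners strategy is exactly the paper's idea (the paper uses a \emph{decreasing} diagonal and the corners $(0,0)$, $(1,1)$; you use the mirror image). The paper, however, does not use a potential: it simply observes that each query touches only the segment in its own row, so the $n$ segments must be searched independently, each costing at least $\log_2(m/n)$ queries.

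Your potential-function bookkeeping has a genuine error. When the query lands inside $I_{q_2}$, both admissible corners delete the queried column itself, so the two surviving pieces have sizes $q_1-a_{q_2}$ and $b_{q_2}-q_1$, summing to $w_{q_2}-1$. The larger piece is therefore only $\bigl\lceil (w_{q_2}-1)/2\bigr\rceil$, not $\lceil w_{q_2}/2\rceil$ as you claim. For $w_{q_2}=3$ with the query at the midpoint this gives $w_{q_2}\to 1$, and $\Phi$ drops by $\log_2 3>1$; so the ``at most $1$ per round'' bound fails and the argument does not close.

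The fix is painless: replace $\log_2 w_j$ by $\log_2(w_j+1)$. Then after a query the surviving width $w'$ satisfies $w'+1\ge (w_{q_2}+1)/2$, so this modified potential does drop by at most $1$ per round; its initial value $n\log_2(m/n+1)\ge n\log_2(m/n)$ and final value $\log_2 2=1$ still give the bound. Alternatively, drop the potential and argue row-by-row that a segment of length $\lfloor m/n\rfloor$ needs $\lfloor\log_2\lfloor m/n\rfloor\rfloor+1\ge\log_2(m/n)$ queries---this is what the paper does. One more loose end: your sentence ``if $q\notin P$ the Adversary answers a corner whose box avoids all surviving blocks'' needs a two-line case check (compare $q_1$ to the nearest block boundaries and pick $(0,1)$ or $(1,0)$ accordingly); it is true, but you should say why.
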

\begin{proof}
Denote the search space $\sspace = S_1 \times S_2$, $m=|S_1|$, $n=|S_2|$. Suppose that Adversary hides the target somewhere on~the diagonal of the grid $G$, i.e. on the set 
\[\sspace_{\textup{diag}} = \left\{(x, y)\colon y  = \left\lfloor (n-1) \left(1 - \frac{x}{m-1}\right) \right\rfloor \right\}.\]
Algorithm is informed that the target is on this diagonal (an example of~such a diagonal can be seen in Figure \ref{fig:graph4}). 

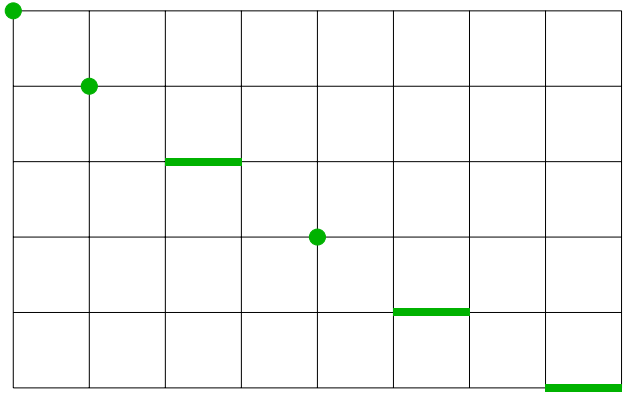
\begin{figure}[htb!]
\begin{center}
\begin{tikzpicture} 
\draw[step=1cm,black,very thin] (0,0) grid (8,5);
\filldraw [black!30!green] (0,5) circle (3pt);
\filldraw [black!30!green] (1,4) circle (3pt);
\fill[black!30!green] (2,2.95) rectangle (3,3.05);
\filldraw [black!30!green] (4,2) circle (3pt);
\fill[black!30!green] (5,0.95) rectangle (6,1.05);
\fill[black!30!green] (7,-0.05) rectangle (8,0.05);
\end{tikzpicture}
\end{center}
\caption{A diagonal on which Adversary hides the target.}
\label{fig:graph4}
\end{figure}

We define a \emph{lower} (respectively \emph{upper}) part of the grid as the set of points $(a,b)$ such that there exists $(a,y)\in \sspace_{\textup{diag}}$ that satisfies $b<y$ (respectively $b>y$).
Adversary makes the following replies to each query on a point $q$.
If $q$ belongs to the lower (respectively upper) part of the grid, then Adversary answers to the corner $(0,0)$ (respectively to the corner $(1,1)$). 
Considering the fact that Algorithm is informed that the target belongs to $\sspace_{\textup{diag}}$, such reply provides no new information to Algorithm and hence we may assume that each query selects a point $q\in \sspace_{\textup{diag}}$.
(Note that because of this artificial assumption, the game could end before the whole grid is searched, i.e., Adversary may reveal the target when the set of potential targets contains more than one element.
Such a strategy shortening factor is valid for proving a lower bound.)
If Algorithm queries a point $q$ on the diagonal, Adversary's answer eliminates only a part of the line segment that includes the~queried point, and other line segments are not influenced.
This is also achieved when Adversary always answers to one of the earlier mentioned corners $(0,0)$ and $(1,1)$.

Having described the behavior of Adversary, we now calculate the query complexity.
Note that each line segment must be searched separately because each query affects only one line segment.
Hence, the number of queries needed to search one line segment in $\sspace_{\textup{diag}}$ is at least $1 + \left\lfloor\log_2 \left\lfloor m/n \right\rfloor\right\rfloor$, because each line segment in $\sspace_{\textup{diag}}$ is of length at least $\lfloor m/n\rfloor$.
As a result, Algorithm has to search separately $n$ line segments spending at least $\log_2(m/n)$ queries on each. 
This gives the required lower bound.
\end{proof}


Our upper bound given in Lemma~\ref{lem:grid-upper} is constructive in the sense that the proof provides a search strategy.
This strategy will be recursive (hence an inductive proof) and the way to break the search space is to perform a binary search on a line segment. The formal notation for that step is the following.

Consider the search space $\sspace$ as $m \times n$ grid, 
and suppose that at some round the set of potential targets is a rectangular grid $P_k = \{m_1, \ldots, m_2\} \times \{n_1, \ldots, n_2\} \subseteq \sspace$, for some $m_1 \le m_2 < m$ and $n_1 \le n_2 < n$. For $n' = \left\lfloor\frac{n_1+n_2}{2}\right\rfloor$ we consider a horizontal line segment $L = L_0 \subseteq P_k$, containing the points $(m_1,n'),\ldots,(m_2,n')$.
We say that Algorithm performs a \emph{binary search on a segment} $L$ if the following is done.
Initially all points in $L$ are potential targets.
In each round Algorithm queries the middle element of $L$, that is, the point $(m_{\textup{mid}},n')$, where $m_{\textup{mid}}=\lfloor(m_1+m_2)/2\rfloor$.
When Adversary answers to the corner $(0,0)$ or $(0,1)$), let $L_{\textup{new}}$ be $(m_{\textup{mid}}+1,n'),\ldots,(m_2,n')$, otherwise (answer to the corner $(1,0)$ or $(1,1)$) --- let $L_{\textup{new}}$ be $(m_1,n'),\ldots,(m_{\textup{mid}}-1,n')$.
Algorithm sets $L:=L_{\textup{new}}$ and continues until $|L|=1$.
After this search, including a query on the last point on $L$, when $|L| = 1$, the set of possible targets satisfies $P_{k'} \subseteq P_k \setminus L_0$. 

Moreover, there is an important side effect that allows to reduce $P_{k'}$ even more. Let $m_{00}$ (respectively $m_{01}$) be the largest $m_{\textup{mid}}$ for which Adversary answered to the corner $(0,0)$ ($(0,1)$, respectively). When there was not such a reply set $m_{00} := m_1-1$ ($m_{01} := m_1-1$, respectively).
Analogously, let $m_{10}$ ($m_{11}$) be the smallest $m_{\textup{mid}}$ for which Adversary answered to the corner $(1,0)$ ($(1,1)$, respectively). When there was not such a reply set $m_{10} := m_2+1$ ($m_{11} := m_2+1$, respectively).  Then, (see Figure~\ref{fig:graph7PG})
\[P_{k'} = P_k \setminus R_{00} \setminus R_{10} \setminus R_{01} \setminus R_{11}, \text{~where}\]
\[R_{00} = \{0, \ldots, m_{00}\} \times \{0, \ldots, n'\}, \qquad  R_{10} = \{m_{10}, \ldots, m-1\} \times \{0, \ldots, n'\},\]
\[R_{01} = \{0, \ldots, m_{01}\} \times \{n', \ldots, n-1\}, \qquad  R_{11} = \{m_{11}, \ldots, m-1\} \times \{n', \ldots, n-1\}.\]
Note that some of the sets $R_{00}, \dots, R_{11}$ may be empty, but because a binary search on $L_0$ was performed, it holds
\[ \max \{m_{00}, m_{01}\} + 1 = \min \{m_{10}, m_{11}\},\] so its union contains the whole set $L_0$. But then the set $P_{k'}$ splits into at most two smaller \emph{byproduct grids}, that may be then searched separately, namely $P_{k'} = G_{0} \cup G_{1},$ where 
\[G_0 = \{m_{00}+1, \ldots, m_{10}-1\} \times \{0, \ldots, n'-1\},\]
\[G_1 = \{m_{01}+1, \ldots, m_{11}-1\} \times \{n'+1, \ldots, n-1\}.\] 
See Figure \ref{fig:graph7PG} for the illustration, and the proof of Lemma~\ref{lem:grid-upper} for the properties of these grids. 

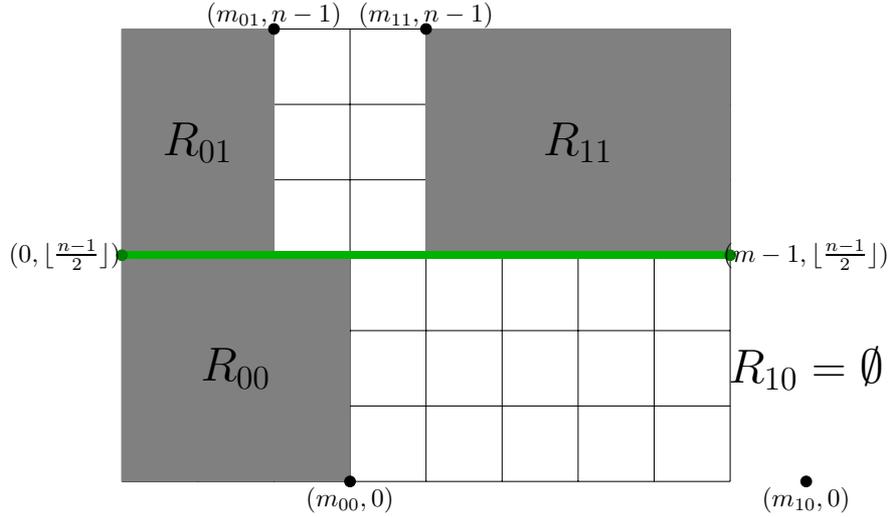
\begin{figure}[htb!]
\begin{center}
\begin{tikzpicture} 
\draw[step=1cm,black,very thin] (0,0) grid (8,6);
\fill[gray, opacity=0.7] (4,3) rectangle (8,6);
\fill[gray, opacity=0.7] (0,0) rectangle (3,3);
\fill[gray, opacity=0.7] (0,3) rectangle (2,6);
\fill[black!30!green] (0,2.95) rectangle (8,3.05);
\filldraw [black] (9,0) circle (2pt);
\filldraw [black] (3,0) circle (2pt);
\filldraw [black] (2,6) circle (2pt);
\filldraw [black] (4,6) circle (2pt);
\filldraw [black!50!green] (0,3) circle (2pt);
\filldraw [black!50!green] (8,3) circle (2pt);
  {\scalefont{0.8}
  \draw (4,6.18) node{$(m_{11},n-1)$};
  \draw (2,6.18) node{$(m_{01},n-1)$};
  \draw (3,-0.24) node{$(m_{00},0)$};
  \draw (9,-0.24) node{$(m_{10},0)$};
  \draw (-0.75,3) node{$(0,\lfloor\frac{n-1}{2}\rfloor)$};
  \draw (9,3) node{$(m-1,\lfloor\frac{n-1}{2}\rfloor)$};
 }
 {\scalefont{1.5}
  \draw (1.5,1.5) node{$R_{00}$};
  \draw (1,4.5) node{$R_{01}$};
  \draw (6,4.5) node{$R_{11}$};
  \draw (9,1.5) node{$R_{10} = \emptyset$};
 }
\end{tikzpicture}
\end{center}
\caption{Possible configuration after the binary search (with $m_{10} = m$).}
\label{fig:graph7PG}
\end{figure}

The following lemma will be proved by induction and for this to be correct we do not assume any relation between the sizes of dimensions $m$ and $n$.
\begin{lemma} \label{lem:grid-upper}
For any $m \geq 1$ and $n = 2^k - 1$, where $k \in \mathbb{N}$,
\[
\qc(m, n) \leq n \left( \log_2 \frac{m+n}{n + 1} + 3 \right) - \log_2 (n + 1).
\]
\end{lemma}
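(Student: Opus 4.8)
The plan is to prove this by strong induction on $n = 2^k-1$, using the binary-search-on-a-segment operation described above as the basic recursive step. The base case is $k=1$, i.e. $n=1$, where the search space is a single column of height $1$ (a horizontal line of $m$ points), and a plain binary search finds the target in about $\log_2 m$ queries; one checks the claimed bound holds here. For the inductive step I would take $n = 2^k-1$ with $k \geq 2$, perform one binary search on the central segment $L_0$ at height $n' = \lfloor (n-1)/2 \rfloor$, and then recurse on the two byproduct grids $G_0$ and $G_1$ produced by that search, as set up in the discussion preceding the lemma.

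The key structural observation to exploit is that the central segment $L_0$ has $n'$ rows strictly below it and $n'$ rows strictly above it, and that $G_0$ lives entirely below $L_0$ (heights $0,\ldots,n'-1$) while $G_1$ lives entirely above it (heights $n'+1,\ldots,n-1$); thus each of $G_0, G_1$ is a grid of height exactly $(n-1)/2 = 2^{k-1}-1$, which is again of the required form $2^{k-1}-1$ so that the inductive hypothesis applies directly. The crucial quantitative point is that the widths of $G_0$ and $G_1$ are constrained by the side-effect of the binary search: writing $m_0 = m_{10}-m_{00}-1$ and $m_1 = m_{11}-m_{01}-1$ for the widths of $G_0$ and $G_1$, the identity $\max\{m_{00},m_{01}\}+1 = \min\{m_{10},m_{11}\}$ forces these widths to satisfy $m_0 + m_1 \leq m + (\text{small overlap term})$. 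So I would first carry out the binary search, bound its cost by $1 + \lceil \log_2 m \rceil$ (a constant plus $\log_2 m$) queries, and then add the two recursive costs $\qc(m_0, 2^{k-1}-1) + \qc(m_1, 2^{k-1}-1)$ from the inductive hypothesis.

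After substituting the inductive bound into both recursive terms, the total becomes $\log_2 m$ (plus a constant) from the binary search, plus a sum of two terms each of the form $(2^{k-1}-1)\big(\log_2\frac{m_j + 2^{k-1}-1}{2^{k-1}} + 3\big) - \log_2(2^{k-1})$. The heart of the argument is then to bound $\sum_j \log_2(m_j + 2^{k-1}-1)$ using the width constraint $m_0 + m_1 \lesssim m$ together with the Jensen-type inequality of Theorem~\ref{sum_of_logs}, which turns a sum of two logarithms into $2\log_2$ of the average and lets me replace $m_0, m_1$ by their controlled sum. Collecting the coefficients, the two copies of $2^{k-1}-1$ combine to $n-1$, the $\log_2 m$ from the binary search supplies the missing additive piece, and the $-\log_2(2^{k-1})$ telescoping terms should reassemble into the target $-\log_2(n+1)$.

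The main obstacle I anticipate is the bookkeeping in this final collection step: getting the additive constant exactly right (the paper claims the constant $3$ inside the logarithm and a clean $-\log_2(n+1)$ tail), and making sure the overlap slack in $m_0 + m_1 \leq m + O(1)$ is absorbed correctly when passing through Jensen's inequality, rather than merely up to a constant factor. In particular the interplay between the $\log_2\frac{m+n}{n+1}$ form of the bound and the $\log_2\frac{m_j + 2^{k-1}-1}{2^{k-1}}$ forms coming from the two subproblems must be arranged so that the $+n$ shift in the numerator and the $+1$ shift in the denominator are preserved exactly; this is the delicate part where a loose estimate would give the right asymptotics but miss the stated closed form. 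Verifying the base case $n=1$ against the same closed form, so that the induction is genuinely anchored, is a routine but necessary check.
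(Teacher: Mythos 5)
Your proposal follows the paper's own proof essentially step for step: induction on $k$ with base case $n=1$, one binary search on the middle row costing $\lfloor\log_2 m\rfloor+1$ queries, two byproduct grids of height exactly $2^{k-1}-1$, the Jensen-type inequality of Theorem~\ref{sum_of_logs} to merge the two logarithmic terms, and a final collection of constants. The one point you leave hedged --- the ``small overlap term'' in $m_0+m_1\leq m+O(1)$ --- is in fact zero: the identity $\max\{m_{00},m_{01}\}+1=\min\{m_{10},m_{11}\}$ together with $m_{00},m_{01}\geq -1$ and $m_{10},m_{11}\leq m$ yields $m_0+m_1\leq m$ exactly by a short case analysis (the paper instead normalizes Adversary's replies without loss of generality so that the two widths sum to exactly $m$), and with that exact inequality your calculation closes to the stated bound rather than merely its asymptotic form.
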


\begin{proof}
The lemma holds for $n = 1$, as $\qc(m,1) \le  \log_2 m + 1$ is the proper bound for the binary search and $\log_2 m + 1 \le \left( \log_2 \frac{m+1}{2} + 3 \right) - \log_2 (2)$. The lemma also holds for $m = 1$, as $\qc(1,n) \leq \log_2 n + 1 \le 3n - \log_2 n$.
Let now $m > 1$ and $n = 2^k - 1$, where $k \in \mathbb{N}$.
Assume for induction that for any $m_1 \in \mathbb{N}$ and $n_1 = 2^{k_1} - 1$ for ${k_1} < k$ there is
\begin{equation} \label{eq:grid-induction}
\qc(m_1, n_1) \leq n_1 \left( \log_2 \frac{m_1+n_1}{n_1 + 1} + 3 \right) - \log_2 (n_1 + 1).
\end{equation}

Performing the binary search on the line segment $L= (0, \frac{n-1}{2}) \ldots (m-1, \frac{n-1}{2})$ Algorithm obtains 
two byproduct grids:
\begin{align*}
G_0 & = \{m_{00}+1, m_{00}+2, \dots, m_{10}-1\} \times \left\{0, 1, \dots, \frac{n-1}{2}-1 \right\} \textrm{~and}\\
G_1 & = \{m_{01}+1, m_{01}+2, \dots, m_{11}-1\} \times \left\{\frac{n-1}{2}+1, \frac{n-1}{2}+2, \dots, n-1 \right\}.  
\end{align*}

Note that if $m_{00}\geq 0$ and $m_{01}\geq 0$, then Adversary provided at least one answer to the corner $(0,0)$ and one to the corner $(0,1)$.
Hence, if $m_{00}\geq m_{01}$, then changing each reply of Adversary from the corner $(0,1)$ to $(0,0)$ gives that $G_0$ remains the same and $G_1$ becomes a superset of the former $G_1$.
Due to symmetry we can hence assume that $m_{01}=-1$.
By the same argument either $m_{10}=m$ or $m_{11}=m$.
As already mentioned there is $\max \{m_{00}, m_{01}\} + 1 = \min \{m_{10}, m_{11}\}$, hence the total width of these two grids is $m$. If one of them is empty, we may split the other one arbitrarily into $G_0$ and $G_1$. Therefore, we may assume that after the binary search on the line segment $L$ the set of possible targets consists of two grids of dimensions $m' \times \frac{n-1}{2}$ and $m - m' \times \frac{n-1}{2}$.

Since for the binary search $\left\lfloor \log_2 m \right\rfloor + 1$ queries were performed and using inductive assumption in \eqref{eq:grid-induction} we have that
\begin{eqnarray*}
\qc(m, n) & \leq & \frac{n - 1}{2} \left( \log_2 \frac{m'+ \frac{n - 1}{2}}{\frac{n - 1}{2} + 1} + 3 \right) + \frac{n - 1}{2} \left( \log_2 \frac{m - m'+\frac{n - 1}{2}}{\frac{n - 1}{2} + 1} + 3 \right) \\
&&- 2\log_2 \left( \frac{n - 1}{2} + 1 \right)+\log_2 m + 1\\
&\leq & \frac{n - 1}{2} \left( \log_2 (m'+n/2)+\log_2 (m - m' + n/2) - 2\log_2 \frac{n + 1}{2} + 6 \right) \\
&&- 2 \left( \log_2 (n + 1) - 1 \right)+\log_2 (m+n) + 1\\
&=&  \frac{n - 1}{2} \big( \log_2 (m'+n/2) + \log_2 (m - m'+n/2) - 2 \log_2 (n + 1) + 8 \big) \\
&&- 2 \log_2 (n + 1) + \log_2 (m+n) + 3.
\end{eqnarray*}
By Theorem \ref{sum_of_logs}, $\log_2 (m'+n/2) + \log_2 (m - m'+n/2)\leq 2\log_2(m+n)-2$ and hence
\begin{align*}
\qc(m, n) &\leq n\log_2 (m + n) + \frac{n - 1}{2} \left( - 2 \log_2 (n + 1) + 6 \right)- 2 \log_2 (n + 1) + 3 \\
&= n\log_2 (m+n) + (n - 1) \big(- \log_2 (n + 1) + 3 \big) - 2 \log_2 (n + 1) + 3 \\
&= n\log_2 (m+n) - n \log_2(n + 1) + 3(n-1) - \log_2 (n + 1) + 3\\
&= n\log_2 \frac{m+n}{n + 1} + 3n - \log_2 (n + 1) ,
\end{align*}
which gives the bound in the lemma.
\end{proof}

\begin{proof}[Proof of Theorem~\ref{thm:grid-tight}]
For any $n$, Algorithm can consider playing on a larger grid with dimensions $m \times n_1$, where $k_1$ is minimum so that $n_1 = 2^{k_1} - 1 \ge n$.
By Lemma \ref{lem:grid-upper},
\[
\qc(m,n_1) \leq n_1 \left( \log_2 \frac{m+n_1}{n_1 + 1} + 3 \right).
\]
Consequently for all $m$ and~$n$ such that $m \geq n \geq 1$
\[
\qc(m, n) \leq 2 n \left( \log_2 \frac{m+n}{n + 1} + 3 \right) \leq 2 n \left( \log_2 \frac{m}{n + 1} + 4 \right).
\]
\end{proof}

\section{Lower bounds} \label{sec:lower}

In this section we prove lower bounds for the query complexity $\qc(n_1, n_2, n_3)$ of any 3-dimensional grid and for the query complexity $\qc(n^{\times d})$ of a $d$-dimensional cube. The former meets the upper bound shown in the next section, for the latter there is a gap of order $d$. 

\subsection{Lower bound for $3$-dimensional grid}

This section consists of a proof of the lower bound in Theorem~\ref{thm:lower3D}, which also gives the following:
\begin{corollary} \label{dolna3_szescian}
For any $n\geq 1$,
$\qc(n, n, n) \geq \frac{1}{2} n (n-1)$.
\end{corollary}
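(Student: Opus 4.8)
The plan is to derive Corollary~\ref{dolna3_szescian} directly from Theorem~\ref{thm:lower3D} by specializing to the cube $n_1 = n_2 = n_3 = n$, so essentially no new combinatorial work is required. First I would treat the main case $n \geq 2$, where the hypotheses of Theorem~\ref{thm:lower3D} are met (namely $n_1 \geq n_2 \geq n_3 \geq 2$). Substituting $n_1 = n_2 = n_3 = n$ into the stated bound, the crucial simplification is that the logarithmic factor collapses: since $\frac{n_1 - 1}{n_2 - 1} = \frac{n - 1}{n - 1} = 1$, we have $\log_2 \frac{n_1 - 1}{n_2 - 1} = \log_2 1 = 0$, so the parenthesized term $\left(\log_2 \frac{n_1 - 1}{n_2 - 1} + 1\right)$ equals exactly $1$.

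With that simplification the right-hand side of Theorem~\ref{thm:lower3D} becomes $\frac{1}{2}(n - 1)\cdot n \cdot 1 = \frac{1}{2} n (n - 1)$, which is precisely the asserted bound. Hence for every $n \geq 2$ the inequality $\qc(n, n, n) \geq \frac{1}{2} n (n - 1)$ follows at once, with no estimation beyond observing that $\log_2 1 = 0$.

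The only point that requires separate attention is the boundary value $n = 1$, which lies outside the range $n_3 \geq 2$ covered by Theorem~\ref{thm:lower3D}. Here the search space $\sspace$ consists of a single point, so Algorithm identifies the target with no queries at all and $\qc(1, 1, 1) = 0$, while the claimed lower bound evaluates to $\frac{1}{2}\cdot 1 \cdot 0 = 0$; thus the inequality holds (with equality) in this degenerate case too. I expect no genuine obstacle: once Theorem~\ref{thm:lower3D} is available the corollary is a one-line specialization, and the only care needed is to verify the $n = 1$ edge case explicitly rather than silently assuming the theorem applies there.
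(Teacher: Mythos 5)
Your proof is correct and follows exactly the route the paper intends: the corollary is stated as an immediate consequence of Theorem~\ref{thm:lower3D}, obtained by substituting $n_1=n_2=n_3=n$ so that the factor $\log_2\frac{n_1-1}{n_2-1}+1$ collapses to $1$. Your explicit handling of the degenerate case $n=1$ (where the claimed bound is $0$ and hence trivial) is a small point the paper leaves implicit, but it changes nothing of substance.
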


To prove Theorem~\ref{thm:lower3D},
we show a particular Adversary's strategy that guarantees the sufficient number of queries. 
Let $\sspace$ be the search space $S_1\times S_2\times S_3$, where $S_i=\{0,\ldots,n_i-1\}$, $i\in\{1,2,3\}$.
Let us assume that Adversary informs Algorithm that the~target is hidden somewhere on the (discrete) plane $H$ defined by the equation
\[H = \left \{(x_1, x_2, x_3) \in \sspace \colon  x_2 = \left\lfloor \big(n_2 - 1\big) \left( 1 - \frac{x_1}{2(n_1 - 1)} - \frac{x_3}{2(n_3 - 1)} \right) \right\rfloor\right\}.\]
Figure \ref{fig:graph21} is a graphical visualization of this plane on an exemplary grid with dimensions $6 \times 5 \times 4$.
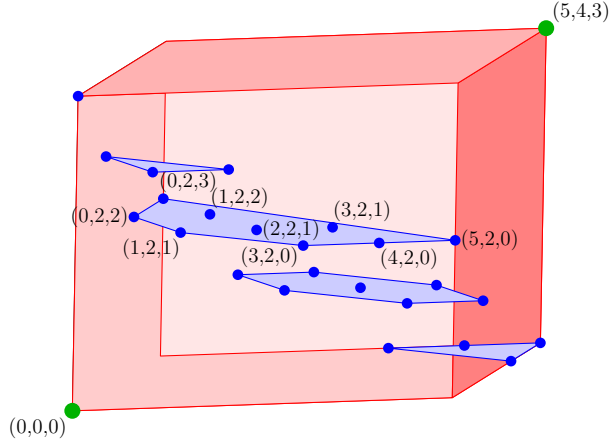
\begin{figure}[htb!]
\begin{center}
\begin{tikzpicture} [line join=round,rotate around y=90, rotate around x=-2, rotate around z=-8]
\coordinate (O) at (0,0,0);
\coordinate (A) at (0,4,0);
\coordinate (B) at (0,4,5);
\coordinate (C) at (0,0,5);
\coordinate (D) at (3,0,0);
\coordinate (E) at (3,4,0);
\coordinate (F) at (3,4,5);
\coordinate (G) at (3,0,5);
\coordinate (H) at (1,3,0);
\coordinate (I) at (0,3,1);
\coordinate (J) at (0,3,2);
\coordinate (K) at (3,2,0);
\coordinate (L) at (2,2,0);
\coordinate (M) at (2,2,1);
\coordinate (N) at (1,2,1);
\coordinate (P) at (1,2,2);
\coordinate (Q) at (1,2,3);
\coordinate (R) at (0,2,3);
\coordinate (S) at (0,2,4);
\coordinate (T) at (0,2,5);
\coordinate (U) at (3,1,1);
\coordinate (V) at (3,1,2);
\coordinate (W) at (2,1,2);
\coordinate (X) at (2,1,3);
\coordinate (Y) at (2,1,4);
\coordinate (Z) at (1,1,4);
\coordinate (AA) at (1,1,5);
\coordinate (AB) at (3,0,3);
\coordinate (AC) at (3,0,4);
\coordinate (AD) at (2,0,5);

\draw[red,fill=red!20,opacity=0.2] (O) -- (C) -- (G) -- (D) -- cycle;
\draw[black!80!red,fill=red!50,opacity=0.2] (O) -- (A) -- (E) -- (D) -- cycle;
\draw[red,fill=red!20,opacity=0.2] (O) -- (A) -- (B) -- (C) -- cycle;
\draw[red,fill=red!10,opacity=0.2] (D) -- (E) -- (F) -- (G) -- cycle;
\draw[red,fill=red!50,opacity=0.2] (C) -- (B) -- (F) -- (G) -- cycle;
\draw[red,fill=red!30,opacity=0.2] (A) -- (B) -- (F) -- (E) -- cycle;

\draw[blue,fill=blue!20,opacity=0.8] (H) -- (I) -- (J) -- cycle;
\draw[blue,fill=blue!20,opacity=0.8] (K) -- (L) -- (N) -- (R) -- (S) -- (T) -- cycle;
\draw[blue,fill=blue!20,opacity=0.8] (U) -- (W) -- (Z) -- (AA) -- (Y) -- (V) -- cycle;
\draw[blue,fill=blue!20,opacity=0.8] (AB) -- (AC) -- (G) -- (AD) -- cycle;

\node [circle, black!30!green, minimum size=6pt, inner sep=0pt, fill, label={[label distance=-2mm ,scale=0.65]200:{(0,0,0)}}] at (O) {};
\node [circle, black!30!green, minimum size=6pt, inner sep=0pt, fill, label={[label distance=-2mm ,scale=0.65]30:{(5,4,3)}}] at (F) {};

\foreach \xy in {A,H,I,J,U,V,W,X,Y,Z,AA,AB,AC,AD,G}{
\node [circle, blue, minimum size=4pt, inner sep=0pt, fill] at (\xy) {};
}

\node [circle, blue, minimum size=4pt, inner sep=0pt, fill, label={[label distance=-3.5mm ,scale=0.65]5:{(0,2,3)}}] at (K) {};
\node [circle, blue, minimum size=4pt, inner sep=0pt, fill, label={[label distance=-1.5mm ,scale=0.65]180:{(0,2,2)}}] at (L) {};
\node [circle, blue, minimum size=4pt, inner sep=0pt, fill, label={[label distance=-2.5mm ,scale=0.65]5:{(1,2,2)}}] at (M) {};
\node [circle, blue, minimum size=4pt, inner sep=0pt, fill, label={[label distance=-2.5mm ,scale=0.65]190:{(1,2,1)}}] at (N) {};
\node [circle, blue, minimum size=4pt, inner sep=0pt, fill, label={[label distance=-1.5mm ,scale=0.65]0:{(2,2,1)}}] at (P) {};
\node [circle, blue, minimum size=4pt, inner sep=0pt, fill, label={[label distance=-2.5mm ,scale=0.65]5:{(3,2,1)}}] at (Q) {};
\node [circle, blue, minimum size=4pt, inner sep=0pt, fill, label={[label distance=-2.5mm ,scale=0.65]250:{(3,2,0)}}] at (R) {};
\node [circle, blue, minimum size=4pt, inner sep=0pt, fill, label={[label distance=-2.5mm ,scale=0.65]350:{(4,2,0)}}] at (S) {};
\node [circle, blue, minimum size=4pt, inner sep=0pt, fill, label={[label distance=-1.5mm ,scale=0.65]0:{(5,2,0)}}] at (T) {};
\end{tikzpicture}
\end{center}
\caption{Points that have to be searched to find the target on the grid with dimensions $6 \times 5 \times 4$.}
\label{fig:graph21}
\end{figure}

Figure \ref{fig:graph22} is a view from the top of this grid where each level is colored differently.
\begin{figure}[htb!]
\begin{center}
\begin{tikzpicture} 
\draw[step=1cm,black,very thin] (0,0) grid (5,3);
\filldraw [black!10!yellow] (0,0) circle (3pt);
\filldraw [black!20!red] (1,0) circle (3pt);
\filldraw [black!20!red] (0,1) circle (3pt);
\filldraw [black!20!red] (2,0) circle (3pt);
\filldraw [black!20!green] (0,3) circle (3pt);
\filldraw [black!20!green] (0,2) circle (3pt);
\filldraw [black!20!green] (1,2) circle (3pt);
\filldraw [black!20!green] (1,1) circle (3pt);
\filldraw [black!20!green] (2,1) circle (3pt);
\filldraw [black!20!green] (3,1) circle (3pt);
\filldraw [black!20!green] (3,0) circle (3pt);
\filldraw [black!20!green] (4,0) circle (3pt);
\filldraw [black!20!green] (5,0) circle (3pt);
\filldraw [black!10!orange] (1,3) circle (3pt);
\filldraw [black!10!orange] (2,3) circle (3pt);
\filldraw [black!10!orange] (2,2) circle (3pt);
\filldraw [black!10!orange] (3,2) circle (3pt);
\filldraw [black!10!orange] (4,2) circle (3pt);
\filldraw [black!10!orange] (4,1) circle (3pt);
\filldraw [black!10!orange] (5,1) circle (3pt);
\filldraw [black!10!blue] (3,3) circle (3pt);
\filldraw [black!10!blue] (4,3) circle (3pt);
\filldraw [black!10!blue] (5,3) circle (3pt);
\filldraw [black!10!blue] (5,2) circle (3pt);
\end{tikzpicture}
\end{center}
\caption{A view from the top of the grid with dimensions $6 \times 5 \times 4$ (each color is a different level of the grid).}
\label{fig:graph22}
\end{figure}
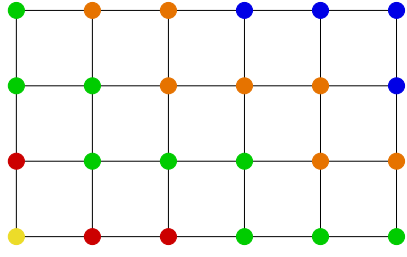

For a given $(x_1, x_2, x_3) \in H$ fix $x_2, x_3$.
Then the number of points on one level and in one row parallel to the longest side of~the~cuboid created by the grid $\sspace$ is the cardinality of the set
\[B = \left\{ x_1: \left\lfloor (n_2 - 1) \left( 1 - \frac{x_1}{2 (n_1 - 1)} - \frac{x_3}{2 (n_3 - 1)} \right) \right\rfloor = x_2 \right\}.\]
The elements of this set satisfy the two following inequalities:
\[x_2 \leq ( n_2 - 1 ) \left( 1 - \frac{x_1}{2 (n_1 - 1)} - \frac{x_3}{2 (n_3 - 1)} \right) < x_2 + 1.\]
From this we obtain bounds for $x_1$:
\[\frac{x_2}{n_2 - 1} \leq 1 - \frac{x_1}{2 (n_1 - 1)} - \frac{x_3}{2 (n_3 - 1)} < \frac{x_2 + 1}{n_2 - 1},\]
\[\frac{x_2}{n_2 - 1} - 1 + \frac{x_3}{2 (n_3 - 1)} \leq - \frac{x_1}{2 (n_1 - 1)} < \frac{x_2 + 1}{n_2 - 1} - 1 + \frac{x_3}{2 (n_3 - 1)},\]
\[1 - \frac{x_2}{n_2 - 1} - \frac{x_3}{2 (n_3 - 1)} \geq \frac{x_1}{2 (n_1 - 1)} > 1 - \frac{x_2 + 1}{n_2 - 1} - \frac{x_3}{2 (n_3 - 1)},\]
\[2 (n_1 - 1) \left( 1 - \frac{x_2}{n_2 - 1} - \frac{x_3}{2 (n_3 - 1)} \right) \geq x_1 > 2(n_1 - 1) \left( 1 - \frac{x_2 + 1}{n_2 - 1} - \frac{x_3}{2 (n_3 - 1)} \right).\]
Hence, the number of elements in the set $B$ is:
\begin{align*}
|B| &\leq 2(n_1 - 1) \left( 1 - \frac{x_2}{n_2 - 1} - \frac{x_3}{2 (n_3 - 1)} \right) - 2(n_1 - 1) \left( 1 - \frac{x_2 + 1}{n_2 - 1} - \frac{x_3}{2 (n_3 - 1)} \right)\\
{}&= 2( n_1 - 1) \left( \frac{x_2 + 1}{n_2 - 1} - \frac{x_2}{n_2 - 1} \right)\\
{}&= 2 \cdot \frac{n_1 - 1}{n_2 - 1}.
\end{align*}

A query anywhere outside of the selected plane does not provide any new information about the location of the target. Whereas a query on this plane can eliminate only a part of the line segment on the same level as the queried point and in the same row. Thus, each row on each level has to be searched separately. The selected plane has $n_1 \, n_3$ points. Let us count how many line segments of~length $2 \cdot \frac{n_1 - 1}{n_2 - 1}$ need to be searched:
\[n_1 \, n_3\,\left(\frac{2 (n_1 - 1)}{n_2 - 1}\right)^{-1} = \frac{1}{2} \, n_1 \, n_3 \, \frac{n_2 - 1}{n_1 - 1} \geq \frac{1}{2} \left( n_2 - 1 \right) n_3,\]
where for the latter we used $\frac{n_1}{n_1-1}\geq 1$.
Finally, the number $\qc^*$ of queries needed to be asked to find the target on the selected plane $B$ is equal to
\begin{align*}
\qc^* &= \frac{1}{2} \left( n_2 - 1 \right) n_3 \Bigg( \log_2 \left( 2 \cdot \frac{n_1 - 1}{n_2 - 1} \right) + 1 \Bigg) \geq \frac{1}{2} \left( n_2 - 1 \right) n_3 \left( \log_2 \frac{n_1 - 1}{n_2 - 1} + 1 \right).
\end{align*}
As basically $\qc(n_1, n_2, n_3) \geq \qc^*$ the theorem holds.
This completes the proof of Theorem~\ref{thm:lower3D}.

\subsection{Lower bound for a $d$-dimensional cube}

Briefly, our approach to prove the lower bound is to allow Adversary to answer only to corners $(0,0, \dots, 0)$ and $(1, 1, \dots, 1)$ and to restrict the target area to the diagonal hyperplane, that is antipodal to these corners. However, for the discrete hyperplane of dimension $d \ge 4$, this approach does not guarantee anymore to remove one point only from the hyperplane per one query.

\begin{proof}[Proof of Theorem~\ref{thm:lower-cube}]
Given $n \in \N^+$ let $\sspace = \{0,\ldots,n-1\}^{\times d}$ be the search space.
Let us assume that Adversary informs Algorithm that the~target is hidden somewhere on the (discrete) hyperplane $H$ defined by the equation
\[H = \left \{(x_1, \dots, x_d) \in \sspace \colon x_d = \left\lfloor n - 1 - \frac{x_1 + x_2 + \dots + x_{d - 1}}{d - 1} \right\rfloor\right\}.\]
Clearly, $|H| = n^{d-1}.$
For a given $(x_1, \dots x_d) \in H$ fix $x_2, \dots, x_d$ and calculate the number of points on a fixed level of $H$ lying on one line, that means the points that differs on the first coordinate only. 
It is defined by the cardinality of the set
\[D = \left\{ x_1: \bigg\lfloor n - 1 - \frac{x_1 + x_2 + \dots + x_{d - 1}}{d - 1} \bigg\rfloor = x_d \right\}.\]
The elements of this set satisfy the following inequalities:
\[
x_d \leq n - 1 - \frac{x_1 + x_2 + \dots + x_{d - 1}}{d - 1} < x_d + 1.
\]
This gives
\[
n - x_d - 1 \geq \frac{x_1 + x_2 + \dots + x_{d - 1}}{d - 1} > n - x_d - 2,
\]
\[
( d - 1)( n - x_d - 1) \geq x_1 + x_2 + \dots + x_{d - 1} > ( d - 1)( n - x_d - 2),
\]
and hence
\[
( d - 1)(n - x_d - 1) - x_2 - \dots - x_{d - 1} \geq x_1 > (d - 1)( n - x_d - 2) - x_2 - \dots - x_{d - 1}.
\]
Hence, the size of the set $D$ is bounded by 
\begin{align*}
|D| & \leq \left( d - 1 \right) \left( n - x_d - 1 \right) - \left( d - 1 \right) \left( n - x_d - 2 \right) \\
{}&= \left( d - 1 \right) \left( n - x_d - 1 \right) - \left( d - 1 \right) \left( n - x_d - 1 \right) + d - 1 = d - 1.
\end{align*}
It means that in each line on one level of the grid $H$, there are $d - 1$ points except for the line segments on the edges, where they can be shorter. Note that, although the calculation was done for the first coordinate, by the symmetry of $H$ the same holds for any other coordinate.

Adversary informs Algorithm that he will always answer for a query on a point $x = (x_1, \dots, x_d)$ in such a~way that:
\begin{itemize}
    \item if $x \notin H$, then he answers in such a way to not eliminate any part of the hyperplane $H$,
    \item otherwise, calculating $s = \sum_{i=1}^d x_i$, if the remainder $l$ of dividing $s$ by $d - 1$, $l=s\mod(d-1)$, is $0$ or $l$ belongs to the interval $\left( \frac{d - 1}{2}, d - 1 \right]$, then Adversary answers to the corner $(1,\ldots,1)$
    \item  if $l$ belongs to the interval $\left( 0, \frac{d - 1}{2} \right]$, then Adversary answers to the corner $(0,\ldots,0)$.
\end{itemize}

It means that Algorithm needs to query all points on the hyperplane $H$ for~which the remainder of dividing the sum of its coordinates by $d - 1$ is $\left\lfloor \frac{d - 1}{2} \right\rfloor$ or $\left\lfloor \frac{d - 1}{2} \right\rfloor + 1$. As a result, $2 \left \lfloor \frac{n^{d - 1}}{d - 1} \right \rfloor$ points have to be queried.
So, for Adversary's optimal strategy 
$\qc(H) \geq 2 \left \lfloor \frac{n^{d - 1}}{d - 1} \right \rfloor$.
\end{proof}

\section{Upper bound}

We finally prove the upper bound for the query complexity on a $d$-dimensional grid stated in Theorem~\ref{thm:upper-d}.


\begin{proof}[Proof of Theorem~\ref{thm:upper-d}]
The proof is done by induction on $d$ with base step (for $d = 2$) given by Theorem~\ref{thm:grid-tight}.
For the inductive step, fix $d \geq 3$ and assume that the theorem holds true for all grids on dimension less that $d$.
Given $n_1 \geq n_2 \geq \cdots \geq n_d\geq 1$ let $G$ be a grid with dimensions $n_1 \times n_2 \times \dots \times n_d$. Algorithm treats the game on $G$ like $n_d$ independent games on $(d-1)$-dimensional grids $G'$ with dimensions $n_1 \times n_2 \times \dots \times n_{d-1}$ each.
Thus, Algorithm plays $n_d$ independent games, each with an upper bound on the query complexity being according to the inductive assumption:
\[\qc(G')\in O \left( \prod_{i = 2}^{d-1} n_i \left( \log_2 \frac{n_1}{n_2} + 1 \right) \right).\]
Since $\qc(G) \leq \, n_d\cdot\qc(G')$, the theorem holds by induction. 
\end{proof}

\bibliographystyle{plain}
\bibliography{references}
\end{document}